\documentclass[11pt]{article}%
%
%
%
%

%
%
%
\IfFileExists{sariel_computer.sty}{\def\sarielComp{1}}{}
\ifx\sarielComp\undefined%
\newcommand{\SarielComp}[1]{}
\newcommand{\NotSarielComp}[1]{#1}%
\else
\newcommand{\SarielComp}[1]{#1}%
\newcommand{\NotSarielComp}[1]{}%
\fi
\newcommand{\IfPrinterVer}[2]{#2}%

\usepackage{graphicx}
\usepackage{caption}
\usepackage[in]{fullpage}%
\usepackage{amsmath}%
\usepackage{amssymb}%
\usepackage{xcolor}%

\SarielComp{\usepackage{sariel_colors}}%

\usepackage[amsmath,thmmarks]{ntheorem}%
\theoremseparator{.}%

\usepackage{titlesec}%
\titlelabel{\thetitle. }%
\usepackage{xcolor}%
\usepackage{mleftright}%
\usepackage{xspace}%
\usepackage{hyperref}%

\IfPrinterVer{%
   \usepackage{hyperref}%
}{%
   \usepackage{hyperref}%
   \hypersetup{%
      breaklinks,%
      ocgcolorlinks, colorlinks=true,%
      urlcolor=[rgb]{0.25,0.0,0.0},%
      linkcolor=[rgb]{0.5,0.0,0.0},%
      citecolor=[rgb]{0,0.2,0.445},%
      filecolor=[rgb]{0,0,0.4},
      anchorcolor=[rgb]={0.0,0.1,0.2}%
   }
}

\theoremseparator{.}%

\theoremstyle{plain}%
\newtheorem{theorem}{Theorem}[section]

\newtheorem{lemma}[theorem]{Lemma}

\theoremstyle{plain}%
\theoremheaderfont{\sf} \theorembodyfont{\upshape}%
\newtheorem*{remark:unnumbered}[theorem]{Remark}%
\newcommand{\myqedsymbol}{\rule{2mm}{2mm}}

\theoremheaderfont{\em}%
\theorembodyfont{\upshape}%
\theoremstyle{nonumberplain}%
\theoremseparator{}%
\theoremsymbol{\myqedsymbol}%
\newtheorem{proof}{Proof:}%

\newcommand{\atgen}{\symbol{'100}}
\newcommand{\SarielThanks}[1]{\thanks{Department of Computer Science;
      University of Illinois; 201 N. Goodwin Avenue; Urbana, IL,
      61801, USA; {\tt sariel\atgen{}illinois.edu}; {\tt
         \url{http://sarielhp.org/}.} #1}}

\numberwithin{figure}{section}%
\numberwithin{table}{section}%
\numberwithin{equation}{section}%

\newcommand{\HLink}[2]{\hyperref[#2]{#1~\ref*{#2}}}
\newcommand{\HLinkSuffix}[3]{\hyperref[#2]{#1\ref*{#2}{#3}}}

\newcommand{\figlab}[1]{\label{fig:#1}}
\newcommand{\figref}[1]{\HLink{Figure}{fig:#1}}

\newcommand{\lemlab}[1]{\label{lemma:#1}}
\newcommand{\lemref}[1]{\HLink{Lemma}{lemma:#1}}%

\providecommand{\eqlab}[1]{}%
\renewcommand{\eqlab}[1]{\label{equation:#1}}
\newcommand{\Eqref}[1]{\HLinkSuffix{Eq.~(}{equation:#1}{)}}

\newcommand{\remove}[1]{}%

\newcommand{\pth}[2][\!]{\mleft({#2}\mright)}%

\renewcommand{\th}{th\xspace}
\usepackage[inline]{enumitem}

\newlist{compactenumA}{enumerate}{5}%
\setlist[compactenumA]{topsep=0pt,itemsep=-1ex,partopsep=1ex,parsep=1ex,%
   label=(\Alph*)}%

\newlist{compactenuma}{enumerate}{5}%
\setlist[compactenuma]{topsep=0pt,itemsep=-1ex,partopsep=1ex,parsep=1ex,%
   label=(\alph*)}%

\newlist{compactenumI}{enumerate}{5}%
\setlist[compactenumI]{topsep=0pt,itemsep=-1ex,partopsep=1ex,parsep=1ex,%
   label=(\Roman*)}%

\newlist{compactenumi}{enumerate}{5}%
\setlist[compactenumi]{topsep=0pt,itemsep=-1ex,partopsep=1ex,parsep=1ex,%
   label=(\roman*)}%

\newlist{compactitem}{itemize}{5}%
\setlist[compactitem]{topsep=0pt,itemsep=-1ex,partopsep=1ex,parsep=1ex,%
   label=\bullet}%

\definecolor{blue25emph}{rgb}{0, 0, 11}
\providecommand{\emphic}[2]{%
   \textcolor{blue25emph}{%
      \textbf{\emph{#1}}}%
   \index{#2}}

\providecommand{\emphi}[1]{\emphic{#1}{#1}}

\definecolor{almostblack}{rgb}{0, 0, 0.3}
\providecommand{\emphw}[1]{{\textcolor{almostblack}{\emph{#1}}}}%

\providecommand{\Mh}[1]{#1}%

\newcommand{\Term}[1]{\textsf{#1}}

\newcommand{\IncludeGraphics}[2][]{%
   \typeout{}%
   \typeout{Graphics: #2}%
   \typeout{\ includegraphics[#1]{#2}}%
   \includegraphics[#1]{#2}
   \typeout{}%
}

\newcommand{\Queue}{\Mh{\mathcal{Q}}}%

\newcommand{\obj}{\Mh{x}}%
\newcommand{\XX}{\Mh{\mathcal{X}}}%
\newcommand{\PS}{\Mh{P}}%
\newcommand{\QS}{\Mh{Q}}%

\newcommand{\ps}{\Mh{s}}%
\newcommand{\pt}{\Mh{t}}%

\newcommand{\pp}{\Mh{p}}%
\newcommand{\pu}{\Mh{u}}%
\newcommand{\pv}{\Mh{v}}%
\newcommand{\pq}{\Mh{q}}%
\newcommand{\VorC}{\Mh{\mathcal{V}}}%
\newcommand{\VorX}[1]{\Mh{\mathcal{V}}\pth{#1}}%

\newcommand{\CGAL}{\Term{CGAL}\xspace}%

\newcommand{\Cell}{\Mh{C}}%
\newcommand{\dY}[2]{\left\| #1 - #2 \right\|}%

\usepackage[geometry]{ifsym}

\newcommand{\DiskC}{\ensuremath{%
      \text{\raisebox{-2.5px}{\textcolor{gray}{\FilledCircle}}}}\xspace}%

\newcommand{\diskY}[2]{\DiskC\pth{#1, #2}}
\newcommand{\CellY}[2]{\Mh{\mathrm{cell}}_{#1}\pth{#2}}%
\newcommand{\dSetY}[2]{\Mh{\mathsf{d}}\pth{#1,#2}}%

\newcommand{\iend}{\Mh{\nu}}%

\newcommand{\RU}{\Mh{U}}%
\newcommand{\RR}{\Mh{R}}%

%
%
%

\usepackage{float}

\title{Shortest Secure Path in a Voronoi Diagram}%
\author{%
   Sariel Har-Peled%
   \SarielThanks{Work on this paper was partially supported by a NSF
      AF award CCF-1907400. %
   }%
   \and%
   Rajgopal Varadharajan%
}%
\date{\today}

\begin{document}

\maketitle

\begin{abstract}
    We investigate the problem of computing the shortest secure path
    in a Voronoi diagram. Here, a path is secure if it is a sequence
    of touching Voronoi cells, where each Voronoi cell in the path has
    a uniform cost of being secured. Importantly, we allow inserting
    new sites, which in some cases leads to significantly shorter
    paths. We present an $O(n \log n)$ time algorithm for solving this
    problem in the plane, which uses a dynamic additive weighted
    Voronoi diagrams.  The algorithm is a (arguably interesting)
    combination of the continuous and discrete Dijkstra algorithms.
    We also implemented the algorithm using \texttt{CGAL}.
\end{abstract}

\section{Introduction}

\paragraph{Motivation.}

Consider a facility in an environment where other facilities exist. A
client can communicate safely only with its nearest facility. The
region where one can safely communicate with the facility, is the
\emphw{Voronoi cell} of this facility in the Voronoi diagram of the
facilities \cite{bcko-cgaa-08}. Given two facilities $s$ and $t$,
consider the problem of creating a safe corridor, where one can safely
move from $s$ and $t$ while still being able to communicate safely
with both of them. This might require inserting new middle facilities,
such that the union of the new Voronoi cells (together with the
Voronoi cells of $s$ and $t$), forms a connected set (alternatively,
one can ``secure'' existing site, which has the same cost as inserting
a new one).  The natural question is where and how many sites one
needs to use/insert to establish such a reliable connection, see
\figref{two} for an example where inserting new sites dramatically
reduces the length of the path.

\begin{figure}[h]
    \phantom{}\hfill%
    \IncludeGraphics[page=1]{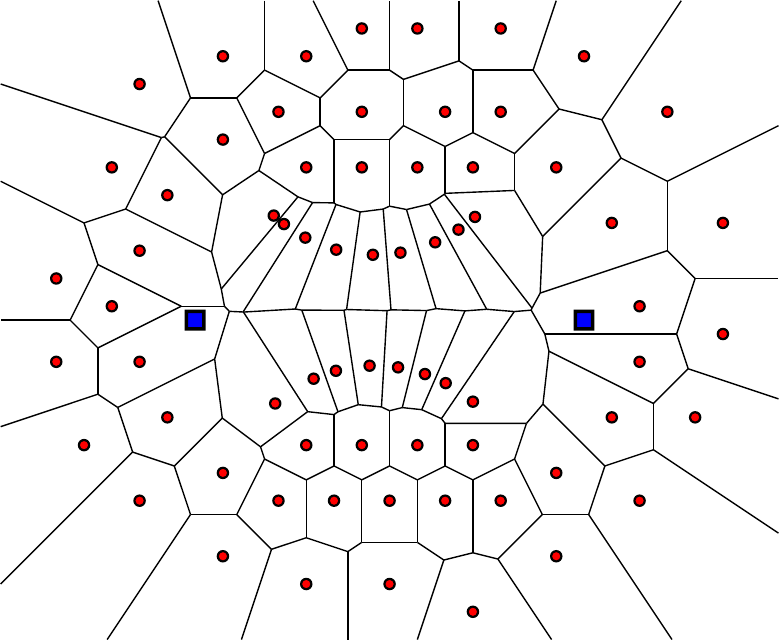}%
    \hfill%
    \IncludeGraphics[page=2]{figs/middle_is_better}%
    \hfill\phantom{}
    \caption{Inserting two middle sites is enough. Here we insert both
       the two endpoints into the diagram, and two middle sites (i.e.,
       the two red cells). Note that one can secure an existing site,
       instead of inserting a new one -- this has the same cost. In
       this specific example, if one uses only existing cells, the
       price is six (the green cells).}%
    \figlab{two}
\end{figure}

\paragraph{Formal problem statement.}

Let $\PS$ be a set of $n$ points in the plane, and let $\VorX{\PS}$
denote its Voronoi diagram.  Two points $\ps,\pt$ in the plane can
directly communicate \emphw{safely}, if $\ps$ and $\pt$ are ``close''
to each other. Formally, we require that the cells of $s$ and $t$ are
adjacent in the Voronoi diagram of $\PS \cup \{\ps,\pt\}$ --
geometrically, this corresponds to the existence of a disk that
contains $\ps$ and $\pt$, and no other points of $\PS$.  Naturally,
most points can not communicate safely. To overcome this, one can
insert a set of points $\QS = \{\pq_1, \ldots, \pq_\iend\}$ into the
$\PS$, such that $\pq_0 = \ps$, $\pq_\iend = \pt$, and in the new
diagram $\VorX{\PS \cup \QS}$ the point $\pq_i$ can safely communicate
with $\pq_{i+1}$, for all $i$. It is natural to ask for the minimum
size set $\QS$, such that $\ps$ and $\pt$ can communicate safely, via
$\iend-1$ hops. Note, that a site $\pq_i$ might be an existing site --
this can be interpreted as securing this site (this costs the same as
inserting a new site).

A naive approach is to insert $\ps$ and $\pt$ into the initial Voronoi
diagram, and compute the shortest path between them in the resulting
dual graph (i.e., the Delaunay triangulation of
$\PS \cup \{ \ps, \pt \}$) -- this corresponds to adding all the
intermediate nodes in this path to $\QS$.  However, there are natural
scenarios, see \figref{two}, where allowing the sites to be placed
arbitrarily in the plane significantly reduces the number of sites
needed.  As mentioned earlier, the price of securing an existing site,
or introducing a new site is the same.

\paragraph{Our results.}
We describe how the region of reachable points after $t$ insertions
looks like, and show how to compute it efficiently. We then describe
an $O(n \log n)$ time algorithm for computing the shortest such path
between two points, using a process that is a variant of the Dijkstra
algorithm. We implemented the algorithm using \CGAL, and the
source code is available online \cite{hv-scias-21}.

\paragraph{Related results.}

The basic algorithm is a variant of continuous Dijkstra, a technique
that was used for shortest path algorithms among obstacles, and for
shortest path on polyhedral surfaces in three dimensions
\cite{mmp-dgp-87}. The basic process of inserting points is similar in
nature to Delaunay refinement \cite{r-draqt-95}. The new algorithm can
be viewed as a combination of these two techniques.

\section{The algorithm}

\paragraph{Notations.}

For a point $\pp$ and a radius $r$, let $\diskY{\pp}{r}$ denote the
disk of radius $r$ centered at $\pp$.  For a set of objects $\XX$, in
the plane, let $\VorX{\XX}$ denote the Voronoi diagram of $\XX$. For
an object $\obj$, let $\CellY{\obj}{\XX}$ denote the cell of $\obj$ in
the Voronoi diagram of $\{ \obj \} \cup \XX$.

\subsection{Understanding the problem: %
   The flowering process}

As an example, consider a point set formed by a hexagonal lattice --
see \figref{hex}.  Let $\pq_0 = \ps$ be a starting site (which is one
of the points of the lattice), and assume our purpose is to reach a
site $\pq_\iend = \pt \in \PS$.  The site $\ps$ can communicate
directly with all the sites that are adjacent to its Voronoi cell
$\RR_0 = \CellY{\ps}{\PS}$, without requiring the insertion of any
middle sites.

\begin{figure}[t]
    \centering%
    \IncludeGraphics[page=1]{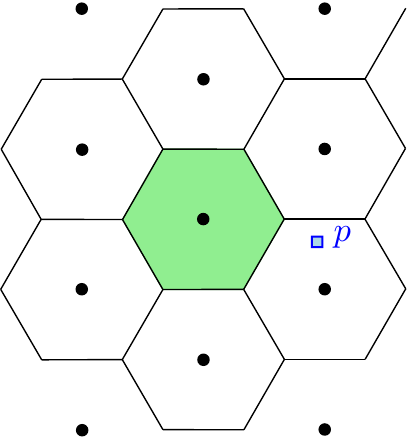}%
    \quad%
    \quad%
    \IncludeGraphics[page=3]{figs/hex_2}%
    \quad%
    \quad%
    \IncludeGraphics[page=4]{figs/hex_2}
    \caption{The hexagonal case.}
    \figlab{hex}
\end{figure}

\begin{figure}[h]
    \phantom{}%
    \hfill%
    \IncludeGraphics[page=3,width=0.32\linewidth]{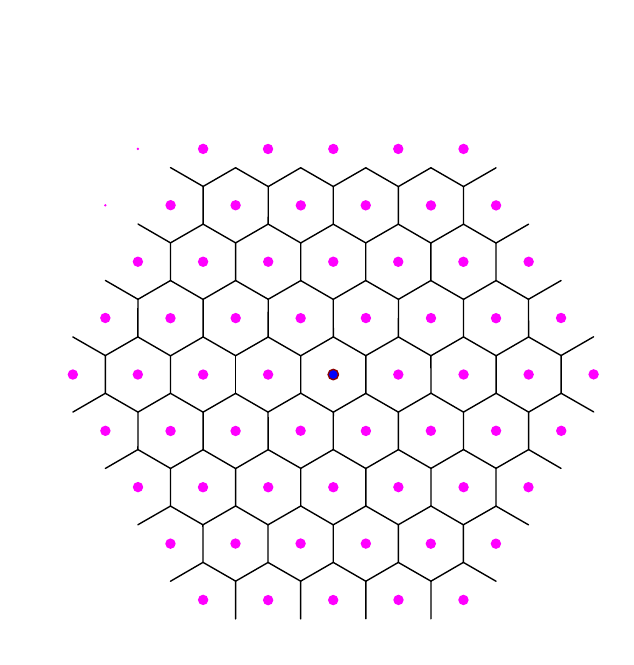}%
    \hfill%
    \IncludeGraphics[page=4,width=0.32\linewidth]{figs/flower}%
    \hfill%
    \IncludeGraphics[page=5,width=0.32\linewidth]{figs/flower}%
    \hfill%
    \phantom{}%

    \vspace{-1.5cm}%

    \phantom{}%
    \hfill%
    \phantom{}%
    \hfill%
    \IncludeGraphics[page=6,width=0.32\linewidth]{figs/flower}%
    \hfill%
    \IncludeGraphics[page=7,width=0.32\linewidth]{figs/flower}%
    \hfill%
    \phantom{}%
    \hfill%
    \phantom{}%

    \vspace{-0.5cm}%

    \caption{The flowering process of how the shortest path region
       grows.}
    \figlab{hex:2}
\end{figure}

If we insert a new middle site $\pq_1$, then its cell must touch the
cell of $\RR_0$. Consider such a location $\pp$ for a middle site. For
the new site of $\pp$ to be adjacent to $\RR_0$, after $\pp$ is
inserted, there must be a point $\pv \in \RR_0$, such that $\pv$ is as
close (or closer) to $\pp$ than it is to $\pq_0 = \ps$. This region of
influence, where $\pv$ can be ``occupied'', is exactly the disk
$\diskY{\pv}{\dY{\pv}{\ps}}$.  See \figref{hex}. As such, the
allowable region to insert this first site, is
\begin{equation*}
    \RU_1 = \cup_{\pv \in \RR_0} \diskY{\pv}{\dY{\pv}{\ps}\bigr.}.
\end{equation*}
(Namely, sites inserted outside $\RU_1$ are not going to be adjacent
to $\RR_0$ in the new Voronoi diagram.)  The set $\RU_1$ is a union of
disks -- in this specific case, it is equal to the union of disks
placed at vertices of the original Voronoi cell, see \figref{hex}.
Indeed, disks centered in the interior of $\RR_0$ are covered by disks
centered at the boundary of the Voronoi cell. Importantly, the union
of disks centered at a Voronoi edge forms a pencil which is covered by
the two disks induced by the two Voronoi vertices of the edge, see
\figref{pencil}.

\noindent%
\begin{minipage}{0.99\linewidth}
    \bigskip%
    \begin{minipage}{0.5\linewidth}
        \hfill{\IncludeGraphics{figs/pencil}}\hfill\phantom{}
    \end{minipage}
    \begin{minipage}{0.5\linewidth}
        \bigskip%
        \captionof{figure}{A pencil of disks. The union of the two
           extreme disks of the pencil (i.e., the one induced by the
           Voronoi vertices), cover the union of the disks in the
           pencil.}%
        \figlab{pencil}
    \end{minipage}
    \bigskip%
\end{minipage}

Let $\PS_1 = \PS \setminus \RU_1$.  The region that can potentially be
covered by such a single insertion of a site into $\RU_1$, belongs to
the region
\begin{equation*}
    \RR_1 = \bigcup_{\pp \in \RU_1} \CellY{\pp}{ \PS_1},
\end{equation*}
The set $\RR_1$ contains all the points in the plane that are closer
to $\RU_1$ than to any point in $\PS_1$. Namely,
$\RR_1 = \CellY{ \RU_1}{ \PS_1}$. Any site $\pu \in \PS_1$ that its
Voronoi cell is adjacent to $\RR_1$ in
$\VorX{ \{\RU_1\} \cup \PS_1 }$, can communicate with $\pq_0$ via the
insertion of a single site.

As such, we may treat $\RU_1$ simply as the union of disks at the
finitely many vertices of $\RR_0$. The boundary of $\RU_1$ is all that
is necessary to accurately compute the boundary of $\RR_1$ (which may
be viewed as a wavefront computed by the algorithm). The set $\RR_1$
is then the union of Voronoi cells in a Voronoi diagram of disks and
points. This underlying diagram is therefore an additive weighted
Voronoi diagram, and its boundary edges are portions of hyperbolas and
straight segments.  See \figref{hex:2}.

Formally, let the set $\RR_i$ be the set reachable by inserting $i$
middle sites starting at $\ps$. The set $\RU_{i+1}$ is the region
where one can insert an $(i+1)$\th point, $\pq_{i+1}$, such that one
can form a connected, safe chain back to $s$. By the discussion above,
$\RU_{i+1}$ can be treated as the union of finitely many disks at the
vertices of $\RR_i$, with some additional care.  Let
$\PS_{i+1} = \PS \setminus \RU_i$.  The $(i+1)$\th{} \emphw{occupied}
region is
\begin{equation}
    \RU_{i+1} = \cup_{\pv \in \RR_i} \diskY{\pv}{\dSetY{\pv}{ \PS
          \setminus \RR_i }\bigr.},%
    \eqlab{insert}%
\end{equation}
where $\dSetY{\pv}{\PS} = \min_{\pp \in \PS} \dY{\pv}{\pp}$.  And the
$(i+1)$\th \emphi{reachable} region is
\begin{equation*}
    \RR_{i+1} = \bigcup_{\pp \in \RU_{i+1}} \CellY{\pp}{ \PS},
\end{equation*}
As soon as the region $\RR_{i+1}$ shares a boundary with the Voronoi
cell of $\pt$, one can stop, and reconstruct the safe path.

\subsection{The basic algorithm}
The input is a set $\PS$ of $n$ points in the plane, and two point
$\ps, \pt \in \PS$. The algorithm initially computes the Voronoi
diagram of $\PS$, and has a queue $\Queue$ of sites, which is
initially set to $\ps$. A site here is a disk (the initial point is a
disk of radius $0$, naturally).

The algorithm works in rounds. In each round, it extracts all the
sites in the queue, and inserts them into the current additive
weighted Voronoi diagram -- these are the outer disks, whose outer
boundary forms the boundary of $\RU_i$ in the $i$\th round.  Next, the
algorithm scans all the inserted sites, and looks at the adjacent
Voronoi vertices of their cells. All of the Voronoi vertices that are
outside $\RU_i$ (how to check for this condition is described below),
are inserted into the queue to be handled in the next round.

Once the wavefront reaches $\pt$, the process stops -- or
alternatively, we compute this diagram till the whole plane is
covered, and preprocess the resulting map for point location.

\medskip%
\noindent%
\begin{minipage}{\linewidth}%
    \hfill%
    \IncludeGraphics[page=1]{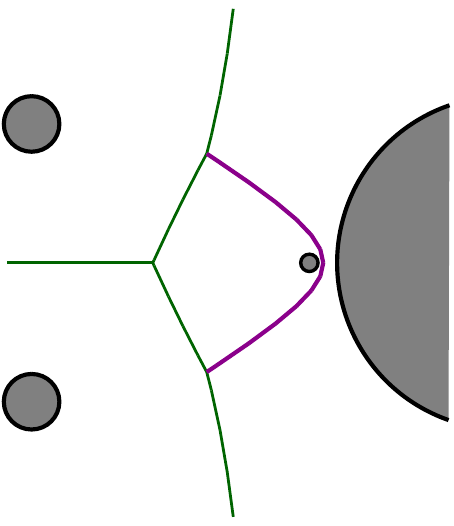} \hfill%
    \IncludeGraphics[page=2]{figs/pocket}%
    \hfill%
    \phantom{} \captionof{figure}{The Voronoi diagram of points and
       disks.  A hyperbolic segment in this diagram, and its pencil of
       disks. Only the two disks, associated with the Voronoi
       vertices, contribute to the outer boundary of the union of the
       disks.}  \figlab{h:segment}
\end{minipage}

\noindent%
\begin{minipage}{\linewidth}
    \centerline{\IncludeGraphics{figs/pocket_2}}%
    \captionof{figure}{Two generations of disks inserted, and a pocket
       that is left behind. This pocket corresponds to a middle of a
       hyperbolic edge from previous generation, which was covered by
       the inserted disks.  } \figlab{pocket}
\end{minipage}

\paragraph{What disks to insert.}

The algorithm inserts the disks of \Eqref{insert}.  As discussed
earlier, one needs to insert only the disks that have their center on
the boundary of $\RR_i$. This boundary is a union of hyperbolic
segments.  Each such hyperbolic segment corresponds to a pencil of
disks that are included in the union, but the algorithm inserts only
the two extreme disks that corresponds to the two endpoints of the
hyperbolic segment (they are vertices of the Voronoi diagram).  See
\figref{h:segment}.

Each such hyperbolic segment of the boundary might give rise to a
pocket that is left behind the front. Since these pockets cannot
contribute to the front -- we leave them behind unrefined. Such a
pocket is depicted in \figref{pocket}.

To correctly leave the pockets behind, the algorithm remembers for
each site inserted the layer (i.e., generation) of the propagation it
came from. As such, when inspecting a Voronoi vertex, the sites that
gave rise to it and their generation are known. All sites that are
from two or more generations ago are not inserted, thus blocking the
wavefront from propagating backwards.

\paragraph{Reconstructing the shortest path.}

The above insertion process creates ``rings'' of inserted disks, where
each generation forms a single ring, see \figref{rand} for an
example. Importantly, every inserted disk of a certain generation
touches a disk of a previous generation, where a disk of the first
generation touches the source vertex. As such, for every such disk
there is a chain of touching disks that goes back to the source. This
is the \emphi{insertion path} of this disk.

As such, when the wavefront arrives at the target cell, then in the
resulting Voronoi diagram there is a hyperbolic bisection segment of
the Voronoi diagram, that separates the target cell from some newly
inserted disks. Place any disk $D$ centered at this hyperbolic curve,
that touches the target point $t$.  The disk $D$ touches a disk $D'$
that was just inserted. The disk $D$ together with the insertion path
forms a sequence of disks that touch each other, and touch $s$ and
$t$. The key observation is that if we insert the points of tangency
between two consecutive disks in this sequence into the original set
of points, in the resulting diagram, the newly inserted Voronoi cells
form a connected safe zone, with the fewest number of insertions, as
desired.  We note that the choice of points to be inserted is not
unique.  The process is illustrated in \figref{trace:back}.

\medskip%
\noindent%
\begin{minipage}{\linewidth}
    \IncludeGraphics[page=1]{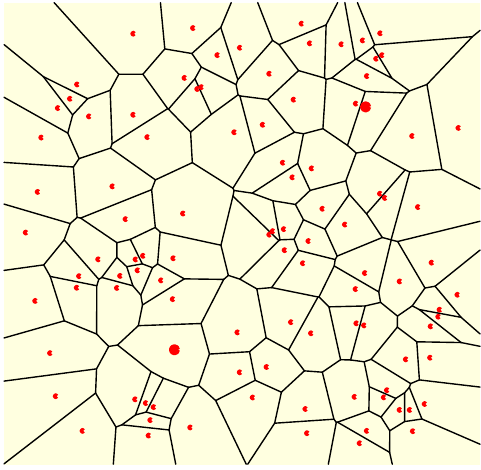} \hfill%
    \IncludeGraphics[page=2]{figs/trace_back} \hfill%
    \IncludeGraphics[page=3]{figs/trace_back} \captionof{figure}{}
    \figlab{trace:back}
\end{minipage}

\subsection{Running time analysis}

A point $\pp$ of the input is \emphi{active} at time $i$, if it is
adjacent to a site inserted in the $i$\th iteration.

\begin{lemma}
    \lemlab{before:after}%
    An input point can be active for at most two iterations.
\end{lemma}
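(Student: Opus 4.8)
The plan is to show that an input point, once it becomes active, is absorbed into the occupied region within one additional round, after which it is deleted from the working point set and can never again be adjacent to an inserted disk; thus its window of activity spans at most two (necessarily consecutive) rounds. The technical core is the following sub-claim: \emph{if $\pp\in\PS$ is active at iteration $i$, then $\pp\in\RU_{i+1}$.} Note this immediately yields the lemma: $\pp\in\RU_{i+1}$ means $\pp$ is not in the point set $\PS\setminus\RU_{i+1}$ used in round $i+2$, so $\pp$ has no Voronoi cell in the round-$(i+2)$ diagram nor in any later one, and hence cannot be adjacent to an inserted disk of generation $\ge i+2$. Applying this to the first iteration $i_0$ at which $\pp$ is active, all active iterations of $\pp$ lie in $\{i_0,\,i_0+1\}$.

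To prove the sub-claim, let $D$ be the generation-$i$ disk whose Voronoi cell is adjacent to $\CellY{\pp}{\cdot}$ in the round-$i$ diagram, and let $x$ be a point on the common boundary of these two cells. First I would argue $x\in\RR_i$: the Voronoi cell of any inserted generation-$i$ disk is contained in $\RR_i=\bigcup_{\pp'\in\RU_i}\CellY{\pp'}{\PS}$, because for a point $y$ in $D$'s cell, the point $\pp'$ lying on the segment from $y$ to the center of $D$ at distance $\mathrm{radius}(D)$ from the center belongs to $D\subseteq\RU_i$ and realizes the distance $\dSetY{y}{D}$, which (by $y$ being in $D$'s cell) is at most $\dY{y}{\pq}$ for every surviving input point $\pq$; and any already-occupied input point closer to $y$ is itself a legitimate witness of $y\in\RR_i$. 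Second, $\pp$ is a nearest point of $\PS\setminus\RR_i$ to $x$: since $x$ lies in $\pp$'s cell of the round-$i$ diagram (computed against a point set containing $\PS\setminus\RR_i$), no surviving input point is strictly closer to $x$ than $\pp$, and $\pp$ itself lies in $\PS\setminus\RR_i$ — unless $\pp\in\RR_i$, in which case $\pp\in\RU_i$ already and the sub-claim holds by the monotonicity $\RU_i\subseteq\RU_{i+1}$ (the occupied radii only grow as the reachable region grows). Combining the two facts, $\dSetY{x}{\PS\setminus\RR_i}=\dY{x}{\pp}$, so by \Eqref{insert} the disk $\diskY{x}{\dSetY{x}{\PS\setminus\RR_i}}$ is one of the disks constituting $\RU_{i+1}$, and it contains $\pp$; hence $\pp\in\RU_{i+1}$.

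I expect the main obstacle to be the containment $x\in\RR_i$, i.e.\ verifying that an inserted disk's Voronoi cell sits inside the reachable region: this requires careful bookkeeping of which input points were removed from the diagram in earlier rounds (precisely those in $\RU_{i-1}$, which are therefore still available as witnesses for membership in $\RR_i$), together with handling the closed-versus-open set subtleties and the degenerate cells of coinciding sites in the definitions of $\RR_i$ and $\RU_i$. I would also record the small observation used above, that for an input point $\pp$ one has $\pp\in\RR_k$ iff $\pp\in\RU_k$, since $\dSetY{\pp}{\PS}=0$ forces the only point of $\RU_k$ witnessing $\pp\in\RR_k$ to be $\pp$ itself. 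Everything else — the distance identity at $x$ and the final counting of active iterations — is routine.
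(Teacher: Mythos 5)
Your proof follows essentially the same route as the paper's: both establish the core claim that once a point $\pp$ becomes active at iteration $i$, a point $\pq$ (your $x$) on the common boundary of $\pp$'s cell and the wavefront realizes the distance to $\pp$, so the disk of \Eqref{insert} centered there pulls $\pp$ into $\RU_{i+1}$, after which $\pp$ is removed from the working point set and cannot interact with later insertions. The one cosmetic difference is that you work from ``active'' as literally defined (adjacent to an individual inserted disk $D$), which forces the extra sub-argument that $D$'s cell sits inside $\RR_i$; the paper instead introduces ``discovered'' (adjacent to the cell of $\RU_i$, which \emph{is} $\RR_i$ by definition) and thereby gets $\pq \in \bd\RR_i$ for free.
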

\begin{proof}
    An input point $\pp$ is \emphi{discovered} at time $i$, if its
    Voronoi cell $\Cell$ in
    $\VorC_i = \VorX{ \{\RU_i\} \cup (\PS \setminus \RU_i) }$ is
    adjacent to the cell of $\RU_i$ in this diagram.

    Informally, once a cell is being discovered, in the next iteration
    of insertions, disks would inserted that touch it, and the
    following iteration it would be blocked from the (wave) front.
    Namely, the boundaries of $\Cell$ and $\RR_i$ intersect, as
    $\RR_i$ is the cell of $\RU_i$ in $\VorC_i$. As such, $\pp$ is on
    the boundary of $\RU_{i+1}$, or $\pp$ might be contained in the
    interior of $\RU_{i+1}$. Indeed, consider a point $\pq$ on the
    common boundary between $\Cell$ and $\RR_i$, and consider the disk
    of radius $\dY{\pp}{\pq}$ centered at $\pq$, and observe that this
    disk has $\pp$ on its boundary, and it is contained in
    $\RU_{i+1}$, see \Eqref{insert}. But that implies that $\Cell$ is
    contained in the Voronoi cell of $\RU_{i+1}$ in
    $\VorC_{i+1} = \VorX{ \{\RU_{i+1}\} \cup (\PS \setminus \RU_{i+1})
    }$. This cell is thus $\RR_{i+1}$. Thus, the cell of $\pp$ might
    interact with cells created in $\RU_i$, and $\RU_{i+1}$. Clearly,
    $\pp$ cannot be adjacent to cells inserted in later iterations.
\end{proof}

Let $n_i$ be the number of the active input points at time $i$.  Since
inserting a disk, updating the Voronoi diagram, and discovering the
new Voronoi sites are all done locally, and can be charged to new
entities created or deleted, the following is straightforward to
verify. It is critical here that we are inserting sites centered at
Voronoi vertices, whose location we already know -- that is, there is
no need to perform a point-location query for the insertion. As such,
we get the following.

\begin{lemma}
    \lemlab{wave}%
    The total running time of the $i$\th iteration, is bounded by
    $O(\sum_{j=i-2}^{i+2} n_j)$.
\end{lemma}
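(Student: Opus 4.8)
The plan is to account for the cost of iteration $i$ by a careful charging argument that distributes all work onto Voronoi-diagram features (cells, edges, vertices) that are created or destroyed during that iteration, and then to show that each such feature is incident to an input point that is active in one of the nearby iterations $j \in \{i-2, \ldots, i+2\}$. First I would recall the structure of a single round: we extract the disks in the queue (these are centered at vertices of $\RR_{i-1}$), insert them one by one into the current additive-weighted Voronoi diagram, and then walk the boundaries of the newly inserted cells to find the Voronoi vertices that seed the next round. The crucial enabling fact, already emphasized in the text, is that each inserted disk is centered at a Voronoi vertex whose coordinates we already know, so insertion costs only the local update of the additive-weighted diagram (creating/destroying a bounded number of hyperbolic/segment edges and vertices in the neighborhood of the new site) plus $O(1)$ work to stitch in the new cell — there is no point-location query.

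Next I would set up the charging. Let $k_i$ be the number of disks inserted in iteration $i$; these correspond to vertices of $\RR_{i-1}$, and each such vertex is a Voronoi vertex incident to at least one input point of $\PS \setminus \RU_{i-1}$ that is active at time $i-1$ (it is adjacent to a disk of generation $i-1$), so $k_i = O(n_{i-1})$. When we insert these $k_i$ disks, the standard Euler/planarity bound on an additive-weighted Voronoi diagram of $m$ sites says the number of cells, edges, and vertices is $O(m)$; moreover, each edge or vertex touched (created or destroyed) during the insertions of round $i$ is incident to either one of the newly inserted disks, a disk inserted in round $i-1$ or $i-2$ (the only earlier generations whose cells can still border the front, per the generation-blocking rule), or an input point. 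An input point whose cell borders a disk of generation $i-2$, $i-1$, or $i$ is, by definition, active at time $i-2$, $i-1$, or $i$. Finally, the scan that discovers the seeds for round $i+1$ walks the new boundary and inspects adjacent Voronoi vertices; each such vertex either becomes a generation-$(i+1)$ site (and is then incident to an input point active at time $i$ or $i+1$) or is discarded in $O(1)$ time, and the number of these vertices is again $O(k_i + (\text{boundary complexity})) = O(\sum_{j=i-1}^{i+1} n_j)$. Summing the three contributions — insertions, local diagram updates, and the discovery scan — gives total cost $O\!\pth{\sum_{j=i-2}^{i+2} n_j}$.

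The main obstacle I anticipate is making the ``charge to a nearby active point'' step airtight, because not every feature created during round $i$ is obviously incident to an input point: some edges and vertices are purely disk-disk features (a hyperbolic arc between two inserted disks, or a vertex where three inserted disks meet). For those I would argue separately: the number of disk-disk features among the $k_i$ disks of generation $i$ (together with the $O(k_{i-1})$ surviving disks of generation $i-1$ and $i-2$ that still border the front) is linear in the number of disks by planarity, hence $O(n_{i-1} + n_{i-2})$, and this bound folds into the claimed sum. I would also need \lemref{before:after} precisely here: it guarantees that no input point active at time $i$ can have been active before time $i-1$, which is what prevents a feature on the current front from being legitimately chargeable only to some far-earlier iteration, and it is what justifies restricting attention to generations $i-2, i-1, i$ when bounding surviving old disks. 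With those two points settled, the remaining bookkeeping is routine and the lemma follows.
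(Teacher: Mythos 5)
Your proposal follows the same high-level route as the paper's (admittedly terse) sketch: bound the work of round~$i$ by the local complexity of the additive-weighted diagram, invoke planarity, use the pocket/generation-blocking rule to ensure every inserted disk touches an input point, and use \lemref{before:after} to restrict attention to a constant window of layers around $i$. You also correctly isolate the disk--disk features as the part that needs a separate planarity count, and you correctly observe that no point-location query is needed. So the plan is sound and matches the paper's.

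There is one link, however, where your chain of reasoning has a hole, and it happens to be the load-bearing step. You write that each disk inserted in round $i$ sits at a wavefront vertex that is ``incident to at least one input point active at time $i-1$, so $k_i = O(n_{i-1})$.'' That implication does not follow from incidence plus planarity alone: a planar bipartite adjacency between $k_i$ new vertices and $n_{i-1}$ active input points has $O(k_i + n_{i-1})$ edges, and ``each vertex has at least one edge to an active input'' then only gives $k_i \le O(k_i + n_{i-1})$, which is vacuous. In principle a single active input point could be adjacent to many wavefront vertices, and your argument does not rule this out. What is actually needed is the converse degree bound that the paper states explicitly: an active input site can have only a \emph{constant} number of newly inserted disks adjacent to it (a packing-type claim, since the inserted disks have radius comparable to the local point spacing). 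The paper phrases this as ``an input site can support only a constant number of such sites around it, which readily implies that while a site is active, only a constant number of new sites inserted might be charged to it.'' That per-site constant is what turns your incidence observation into $k_i = O(n_{i-1})$; without it the bound does not close. The rest of your write-up (the breakdown into insertion / local update / discovery scan, and the handling of disk--disk features) is fine once this constant-degree claim is in place.
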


\begin{proof}
    We only sketch the proof. The quantity stated above bounds the
    number of sites that the wavefront might interact with in the
    $i$\th iteration. In particular, it bounds the number of sites
    inserted at time $i$.  By \lemref{before:after}, a site is active
    only for a constant number of iterations. A new site inserted
    which is adjacent only to inserted sites, is a pocket, and the
    algorithm does not insert it. As such, inserted sites must be
    adjacent to original input points. It is easy to verify that an
    input site can support only a constant number of such sites around
    it, which readily implies that while a site is active, only a
    constant number of new sites inserted might be charged to it.

    This implies by planarity, that the total complexity of the
    Delaunay triangulation in the $i$\th iteration is proportional the
    total complexity of the input points in the adjacent layers, which
    implies the claim.
\end{proof}

\begin{theorem}
    The running time of the algorithm is $O(n \log n)$.
\end{theorem}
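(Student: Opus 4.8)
The plan is to combine the two lemmas already established with an amortization argument and the standard cost of the preprocessing.

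\medskip

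\noindent\textbf{Setup and global bookkeeping.}
First I would account for the one-time costs: computing $\VorX{\PS}$ and initializing the dynamic additive weighted Voronoi diagram both take $O(n\log n)$ time, and (if we run the algorithm to completion) preprocessing the final planar subdivision for point location takes $O(n\log n)$ as well. So it remains to bound the total cost of all the rounds of the wavefront propagation. Let $n_i$ be the number of active input points at time $i$, as defined before \lemref{wave}, and let $N$ be the number of rounds the algorithm performs (the wavefront terminates once it reaches the cell of $\pt$, or once the whole plane is covered, so $N \le n$).

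\medskip

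\noindent\textbf{Summing the per-round bounds.}
By \lemref{wave}, the total running time is at most $\sum_{i=1}^{N} O\!\pth{\sum_{j=i-2}^{i+2} n_j} = O\!\pth{\sum_i n_i}$, since each index $j$ appears in at most five of the inner sums (a constant). Thus the whole running time of the propagation phase is $O\!\pth{\sum_i n_i}$ up to the logarithmic-factor overhead of the dynamic diagram operations. The crux is therefore to show $\sum_i n_i = O(n)$.

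\medskip

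\noindent\textbf{The main point: each input point is active $O(1)$ times.}
This is exactly \lemref{before:after}: an input point can be active for at most two iterations. Hence in the sum $\sum_i n_i$, every one of the $n$ input points is counted at most twice, giving $\sum_i n_i \le 2n = O(n)$. Combining, the propagation phase performs $O(n)$ insertions/updates into the dynamic additive weighted Voronoi diagram; since each such local update (and the associated scan of adjacent Voronoi vertices, with no point-location needed because sites are inserted at known Voronoi vertices) costs $O(\log n)$ amortized, the total is $O(n\log n)$. Adding the $O(n\log n)$ preprocessing cost gives the claimed bound.

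\medskip

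\noindent\textbf{Expected main obstacle.}
The technical heart is already discharged by \lemref{before:after} and \lemref{wave}; what remains to be careful about in writing this up is the interface between the combinatorial bound $\sum_i n_i = O(n)$ and the actual data-structure cost. One should make explicit that every operation charged in \lemref{wave} — inserting a disk, deleting/creating Voronoi features, identifying newly discovered vertices, checking the generation of the sites that created a vertex (to leave pockets behind) — is a local operation in the dynamic additive weighted Voronoi diagram whose amortized cost is $O(\log n)$, and that the number of such operations over the entire run is linear because every created feature is charged against an active input point in an adjacent layer, of which there are $O(1)$ per input point by \lemref{before:after}. With that in hand the $O(n\log n)$ bound follows immediately.
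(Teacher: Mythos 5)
Your proof follows essentially the same route as the paper's: $O(n\log n)$ for the initial Voronoi construction, then bound the propagation phase by summing \lemref{wave} over all rounds and using \lemref{before:after} to conclude $\sum_i n_i = O(n)$. One small inconsistency worth noting: you add an extra $O(\log n)$ factor ``for the dynamic diagram operations,'' but \lemref{wave} already accounts for all diagram-update work in its per-round bound (precisely because insertions happen at known Voronoi vertices and need no point location), so the paper concludes that the propagation phase is $O(n)$, not $O(n\log n)$ --- the theorem's bound is dominated by the initial construction either way.
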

\begin{proof}
    Reading the input and computing the Voronoi diagram takes
    $O(n \log n)$ time. By \lemref{wave}, the total running time of
    the later stages is proportional to $\sum_i O(n_i) = O(n)$.
\end{proof}

\section{Implementation and some pictures}

The algorithm was implemented in \texttt{C++} using CGAL
\cite{cgal-20}. Specifically, we use the 2D Apollonius Graphs
implementation (which was implemented by Menelaos Karavelas and
Mariette Yvinec), see \cite{k-2dvda-20,ky-2ag-20}.

The source is available at bit{}bucket \cite{hv-scias-21}. The
repository also includes some input files, and a script to run the
program on various provided inputs.

\figref{rand} illustrates the execution of the algorithm on a random
input, \figref{illinois} is for a real world input, which is a point
set of locations in Illinois (downloaded from the US government
census).  \figref{hex:exec} illustrates some results from executing
the algorithm on a hexagonal grid.  \figref{results} summarizes the
inputs tested.  The point sets that are generated are of three types:
diagrams of regular hexagonal Voronoi cells, diagrams of randomly
chosen points, and diagrams of points representing the state of
Illinois constructed from census data.

Bounding points along the edges of a bounding box of the point set are
added for the underlying additive diagram phase of the algorithm, to
ensure no edges are infinite. The start and end points are chosen by
considering a rectangular subset of points in the interior and
choosing the furthest two points among them. Then the algorithm is run
until a path is found safely connecting the two cells.

Potential floating point or degeneracy issues are avoided by adding
small perturbations to the point set as points are inserted into the
initial diagram.  Radii of the disks being inserted in the additive
diagram phase are also dilated by a very small factor for the same
reasons.

While we are not reporting the running times, they seem to be near
linear, and agree with the theoretical analysis (ignoring the initial
construction time).

We used the CGAL Apollonius Graph hierarchy for the
implementation. The provided library insertion function has two phases
-- locating a nearest neighbor, and then constructing a new Voronoi
cell by identifying conflict edges, etc. The nearest neighbor location
in our algorithm can theoretically be done in constant time, because
all the points we insert are Voronoi vertices whose adjacent sites are
immediately known. However, we did not bypass the internal nearest
neighbor search because the results of using the Apollonius Graph
hierarchy were already linear time in practice.

The running times seems to be the same on different point sets, of the
same size. However, on regular hex meshes the BFS and algorithm paths
have the same length, and small improvements in practice are noticed
because of the perturbations we apply and when there are a large
number of points. For other point sets, the resulting paths are
significantly shorter than the paths provided by only using existing
sites.

\paragraph*{Acknowledgment.}
The authors would like to thank Alon Efrat -- long time ago he
mentioned this problem to the first author.

\bibliographystyle{salpha}%
\bibliography{voronoi_ins}

\begin{figure}[p]
    \begin{tabular}{cc}
      \IncludeGraphics[width=0.45\linewidth]{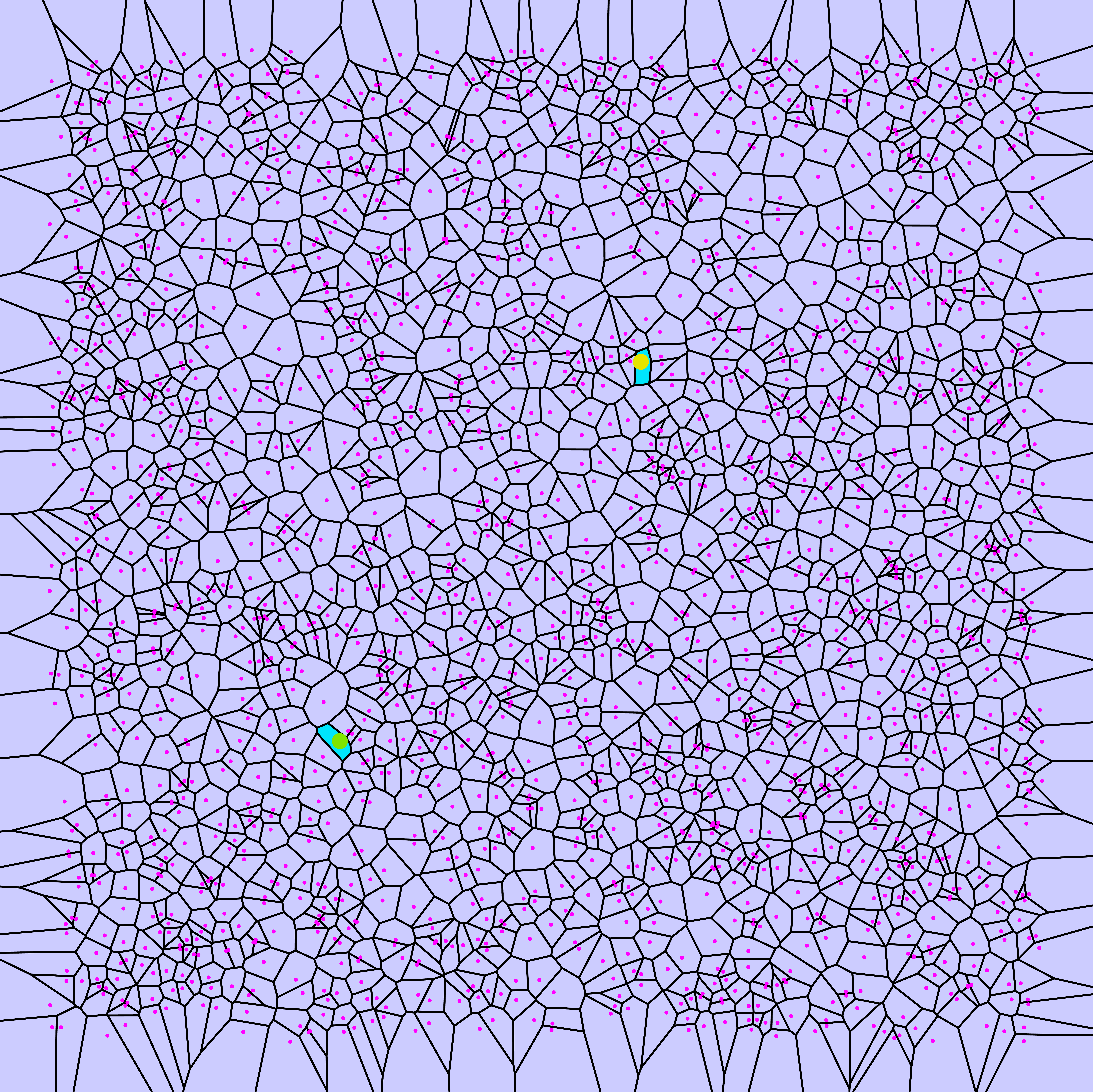}
      &
        \IncludeGraphics[width=0.45\linewidth]{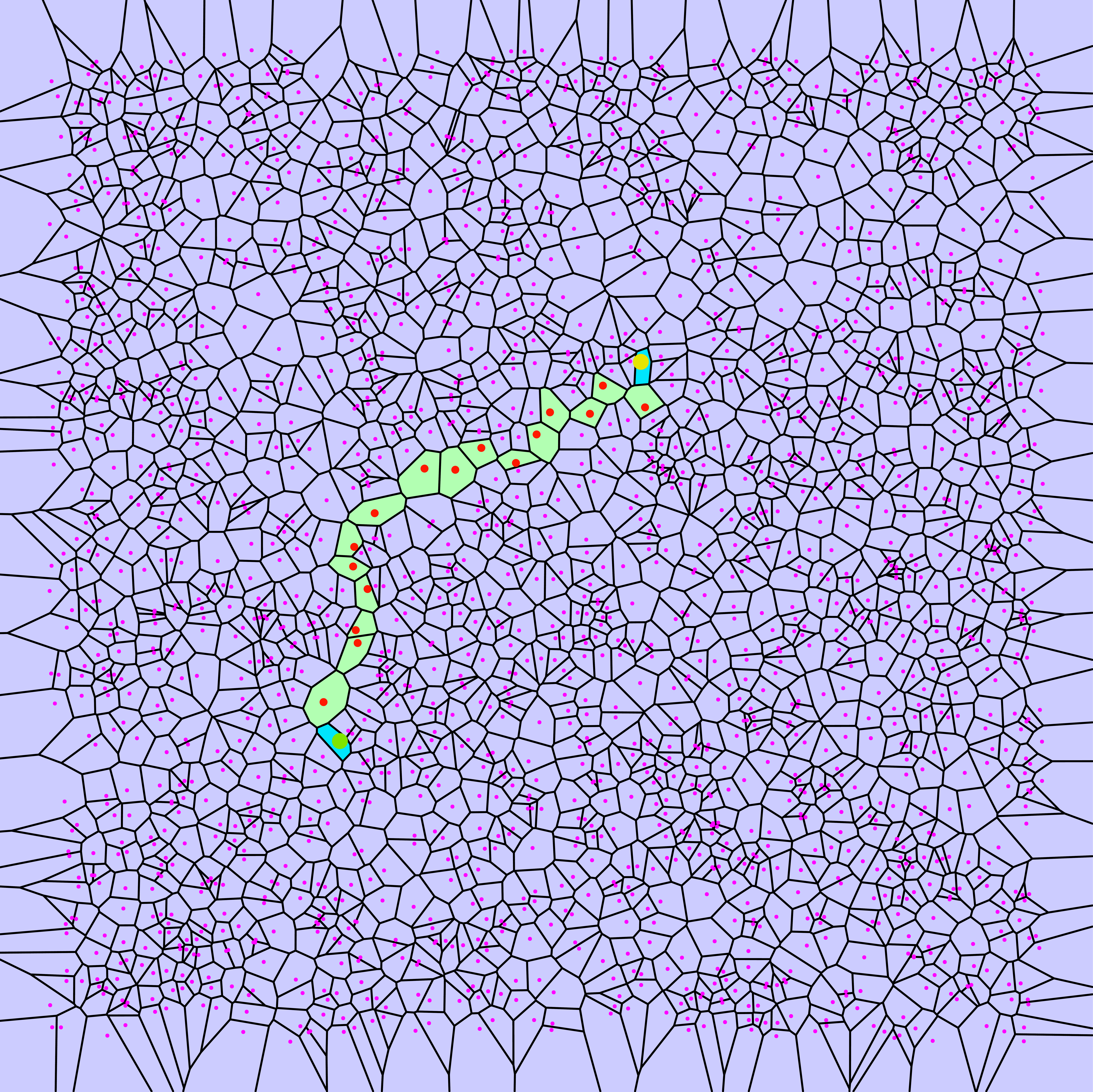}
      \\
      Input & BFS shortest path (16 intermediate points)\\[0.2cm]
      \IncludeGraphics[width=0.45\linewidth]{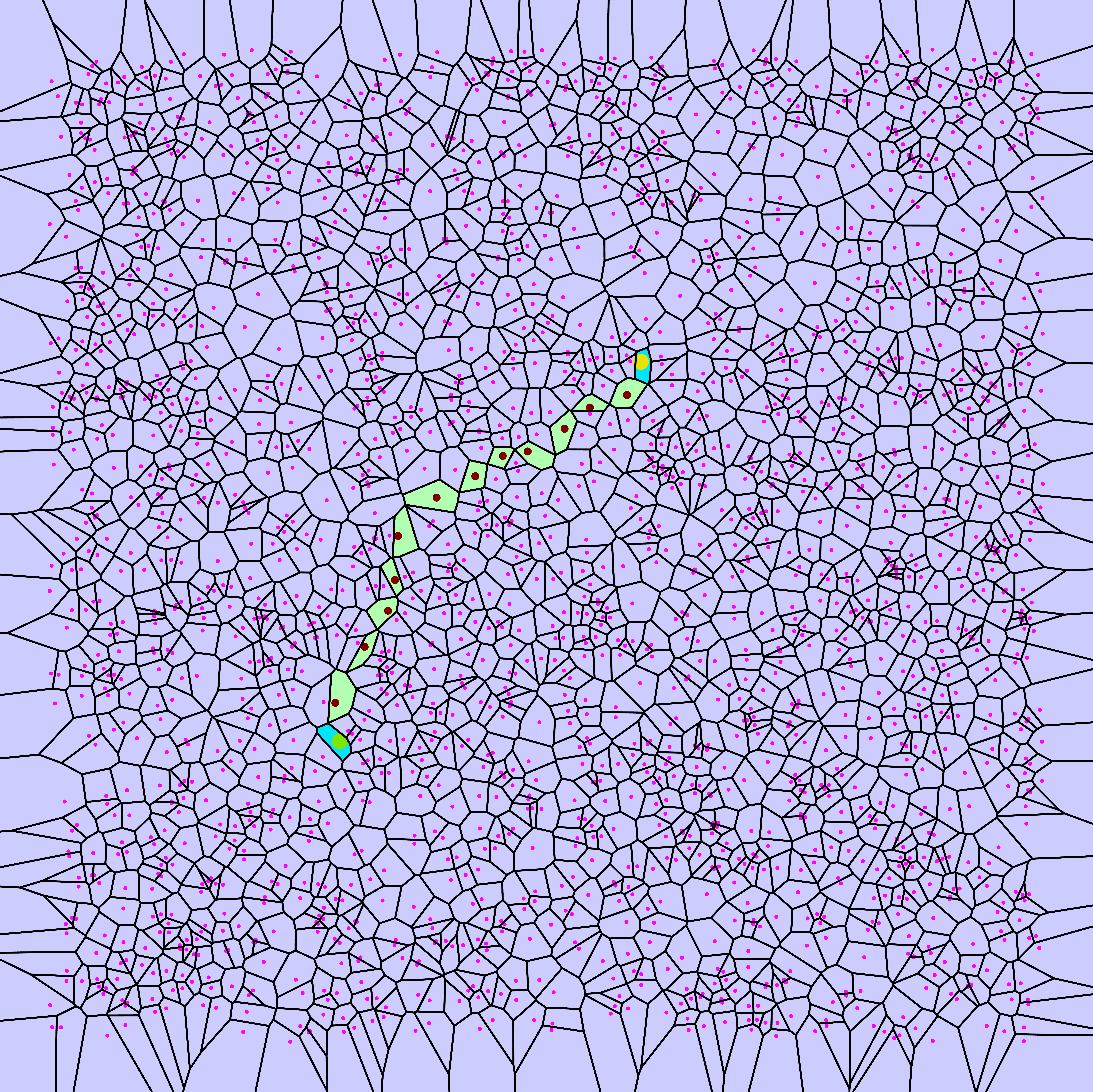}
      &
        \IncludeGraphics[width=0.45\linewidth]{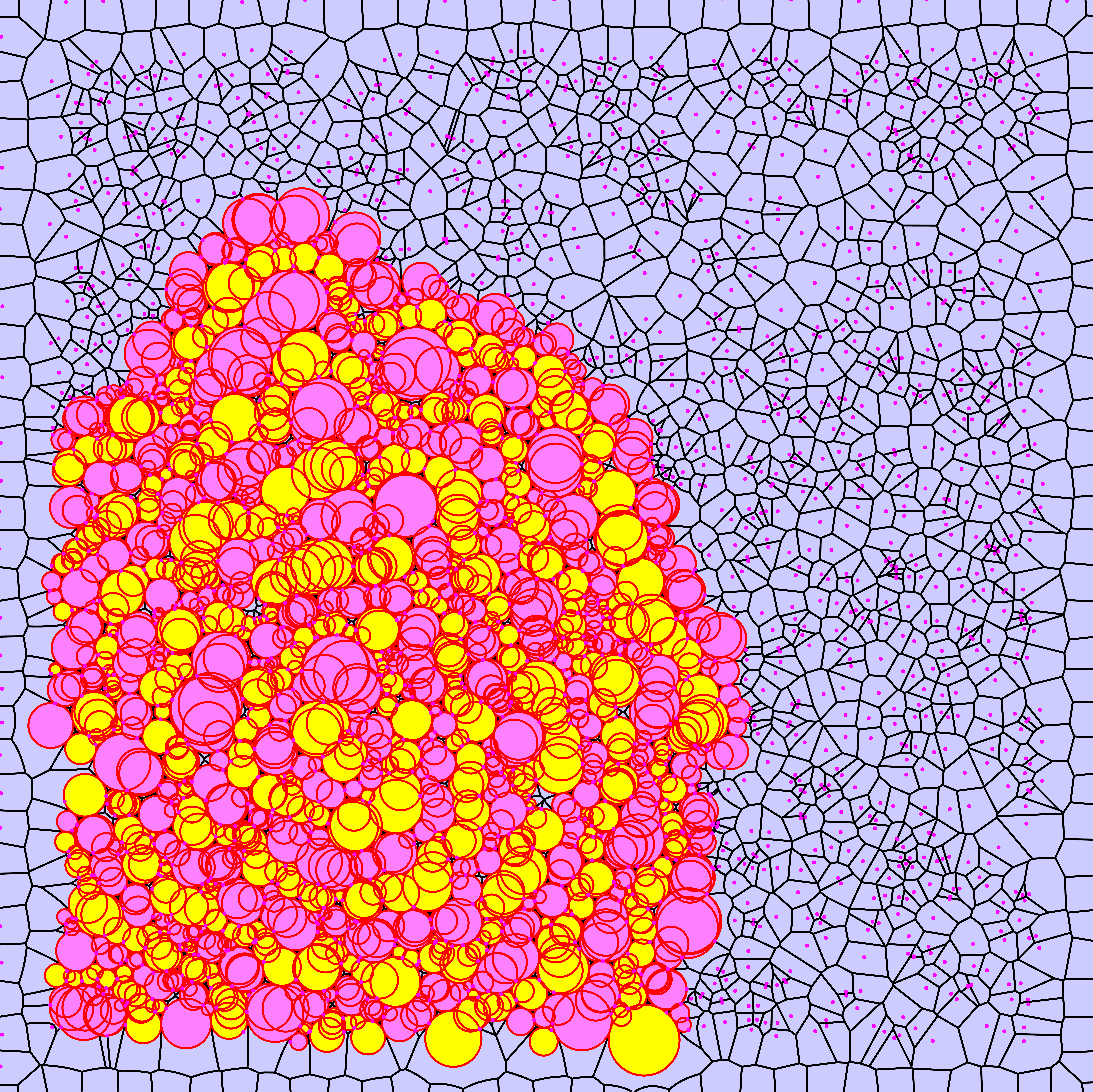}
      \\
      Algorithm shortest path (12 intermediate points) & Sites inserted during  execution.\\
    \end{tabular}
    \caption{A random point set with 2000 points, and the algorithm
       execution on it.}
    \figlab{rand}
\end{figure}

\begin{figure}[p]
    \begin{tabular}{cc}
      \IncludeGraphics[width=0.45\linewidth]{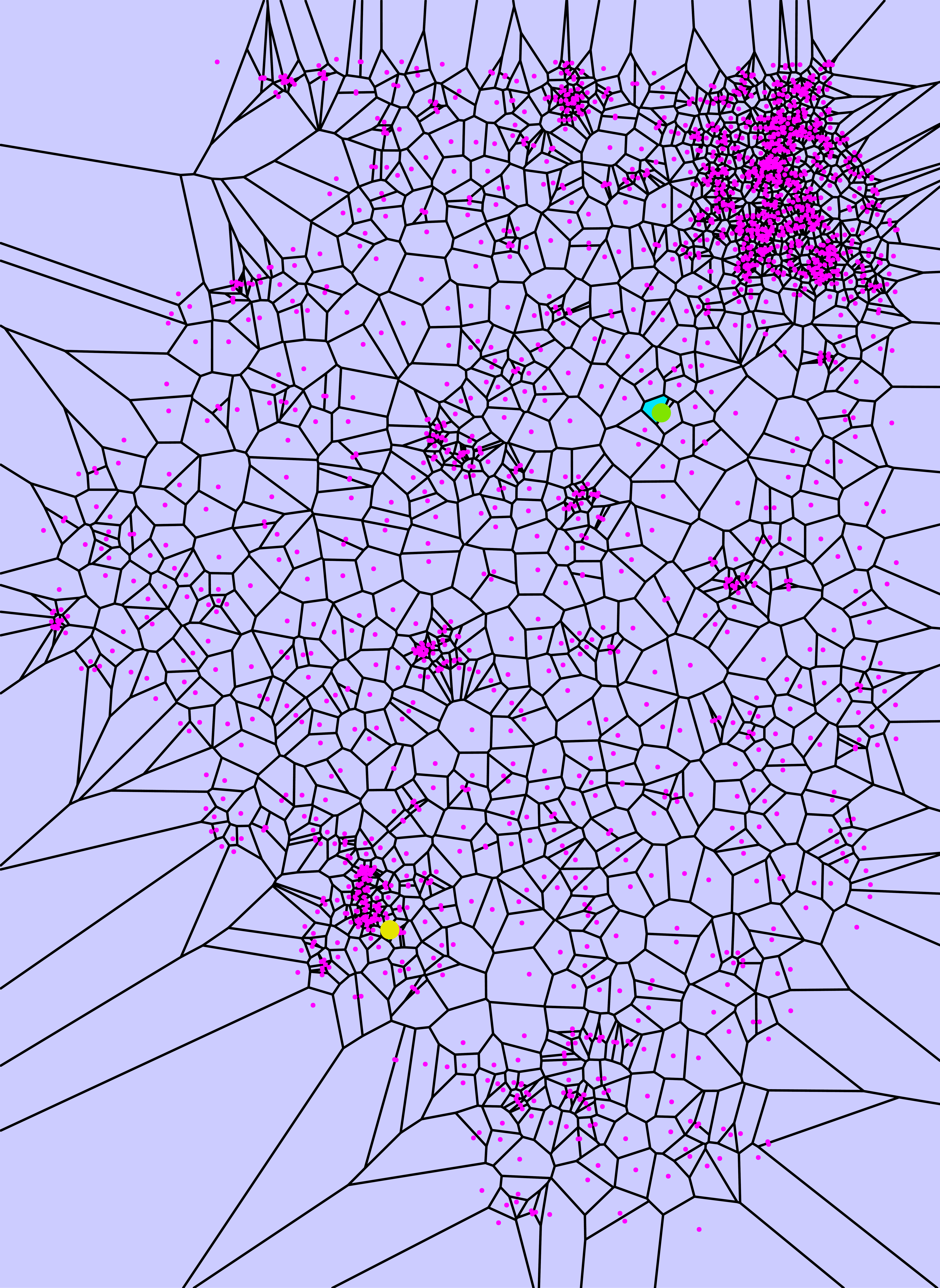}
      &
        \IncludeGraphics[width=0.45\linewidth]{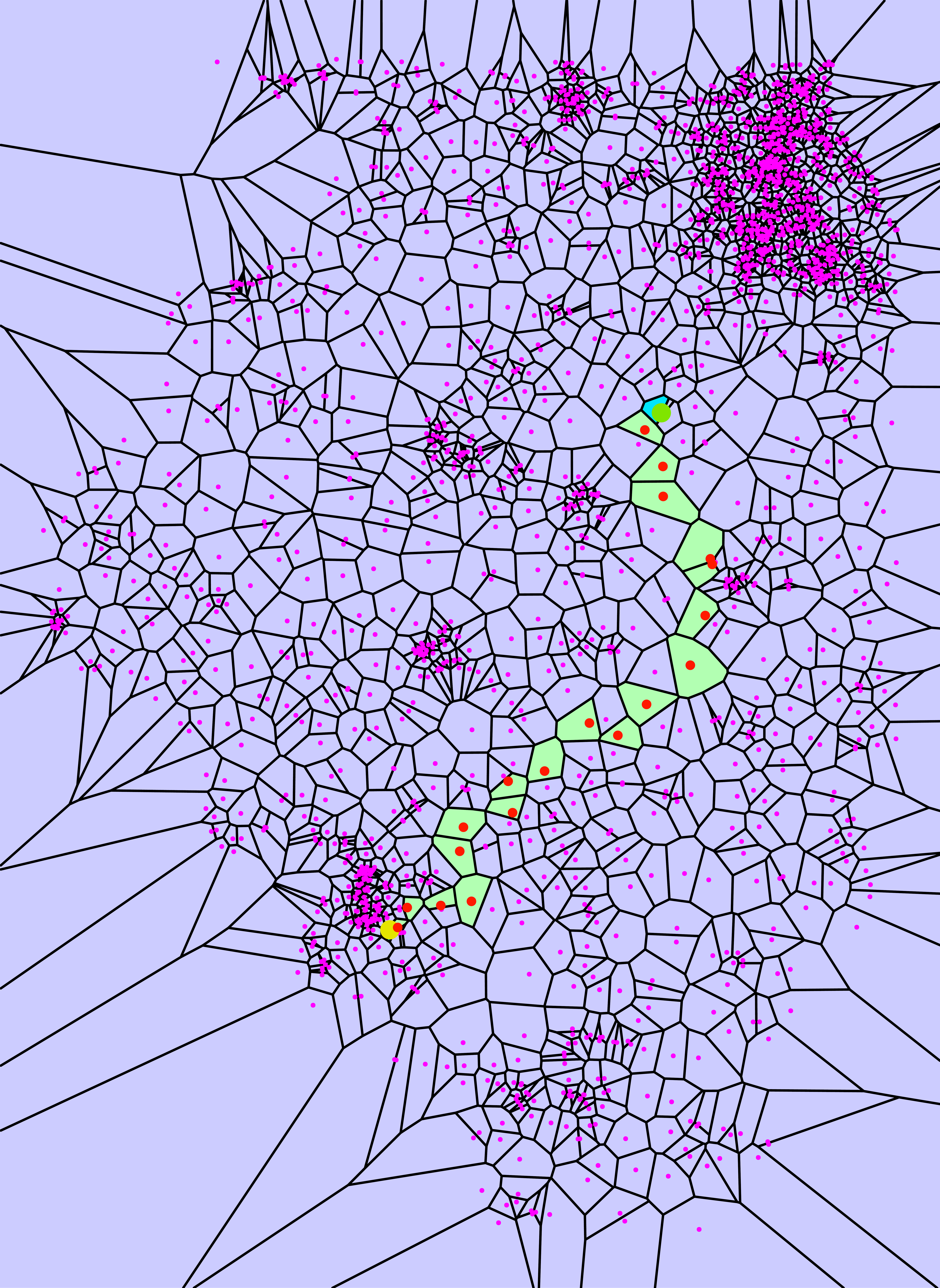}
      \\
      Input & BFS shortest path (19 intermediate points)\\[0.2cm]
      \IncludeGraphics[width=0.45\linewidth]{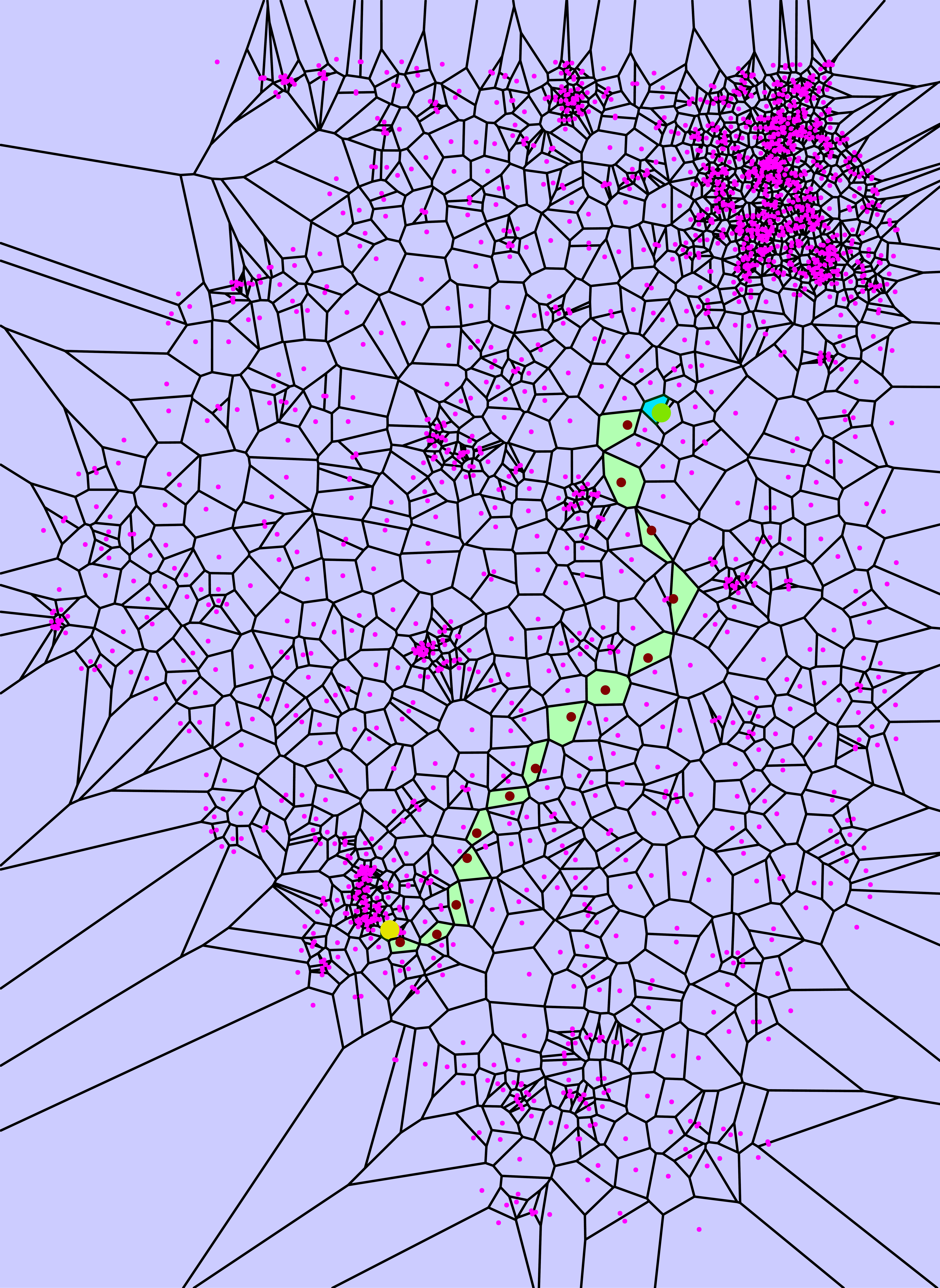}
      &
        \IncludeGraphics[width=0.45\linewidth]{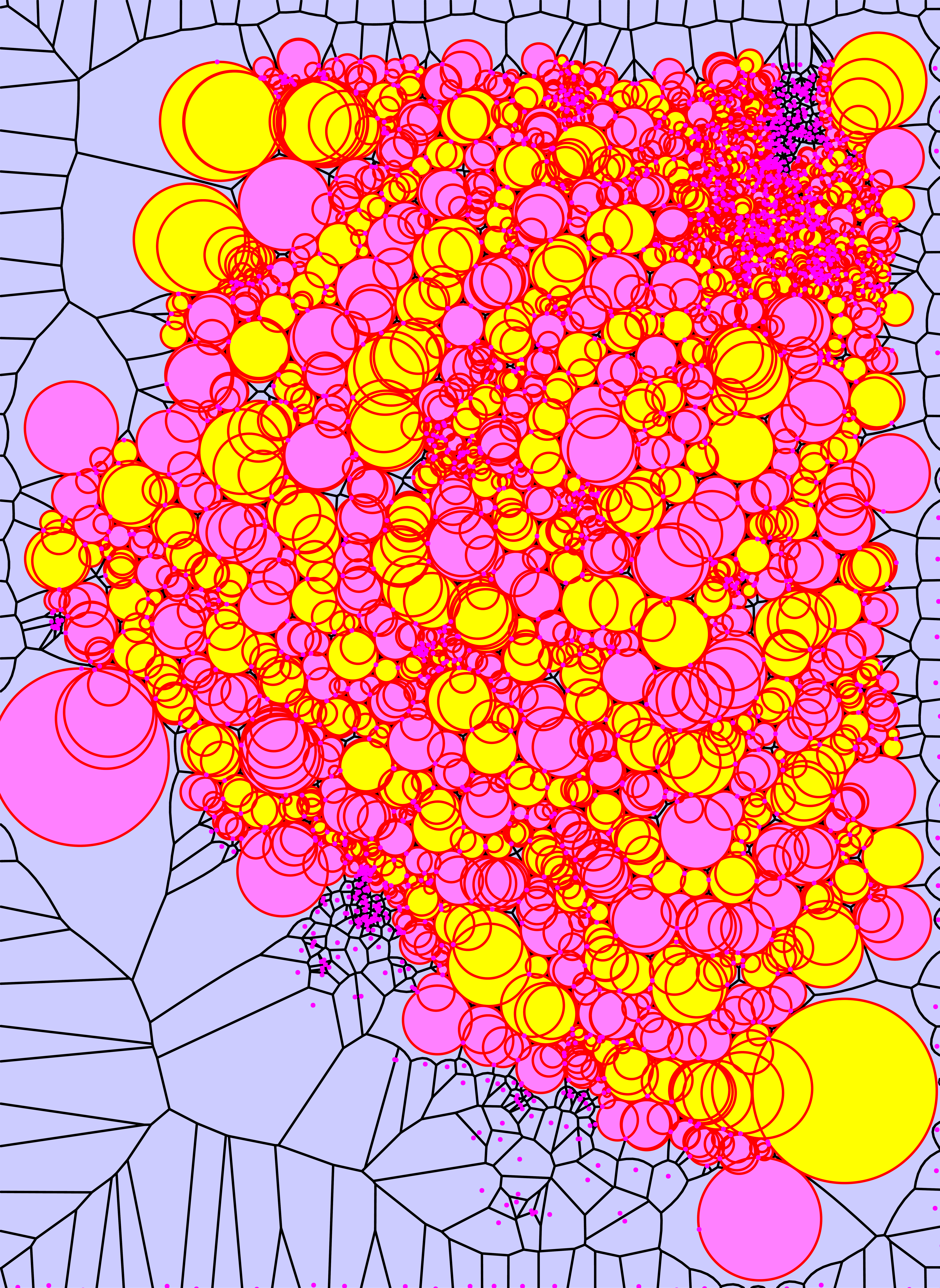}
      \\
      Algorithm shortest path (14 intermediate points) & Sites inserted during  execution.
    \end{tabular}
    \caption{A point set made out of locations in Illinois (downloaded
       from the census and sparsified), and the algorithm execution on
       it.}
    \figlab{illinois}
\end{figure}

\begin{figure}[p]
    \begin{tabular}{cc}
      \IncludeGraphics[width=0.45\linewidth]{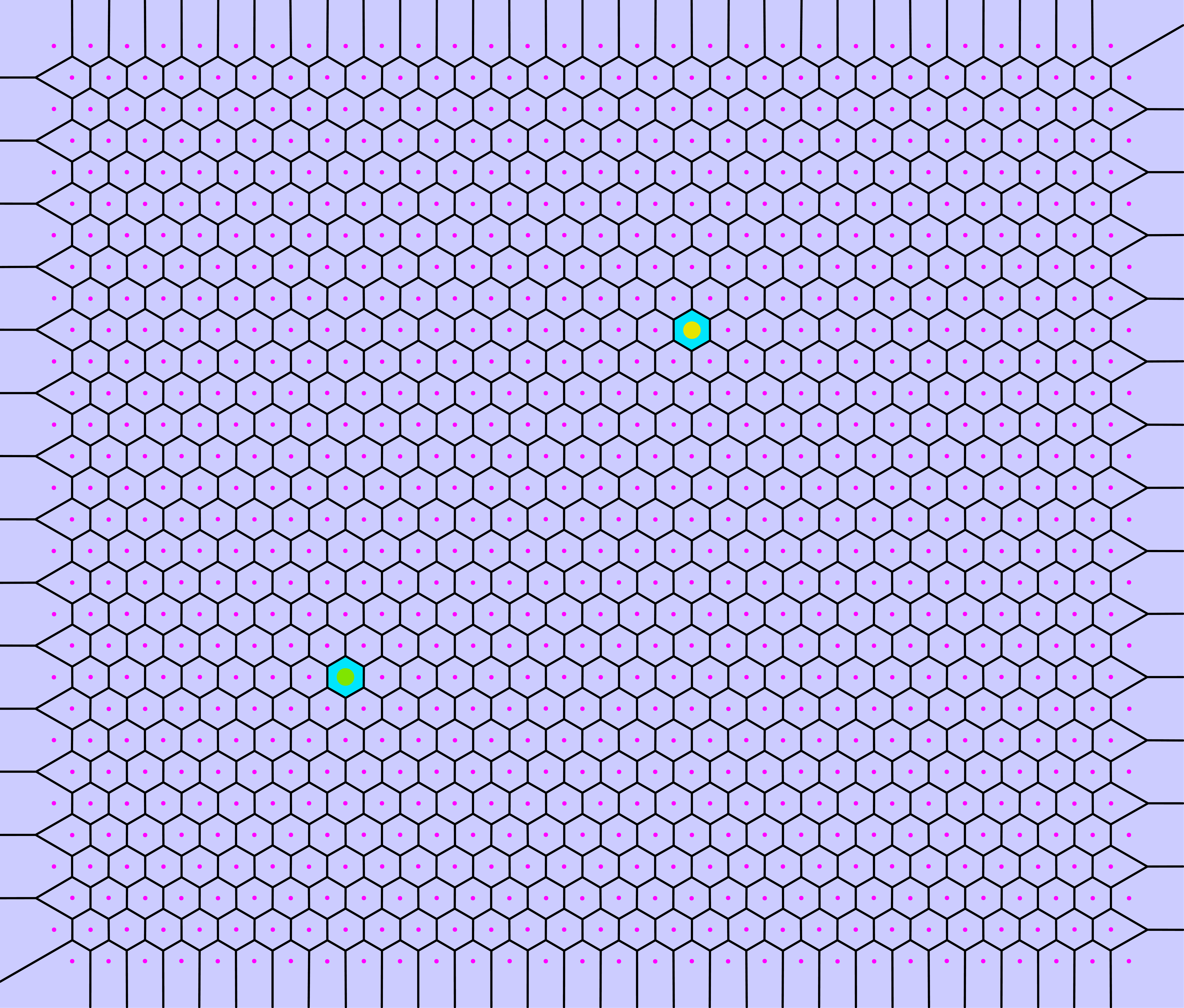}
      &
        \IncludeGraphics[width=0.45\linewidth]{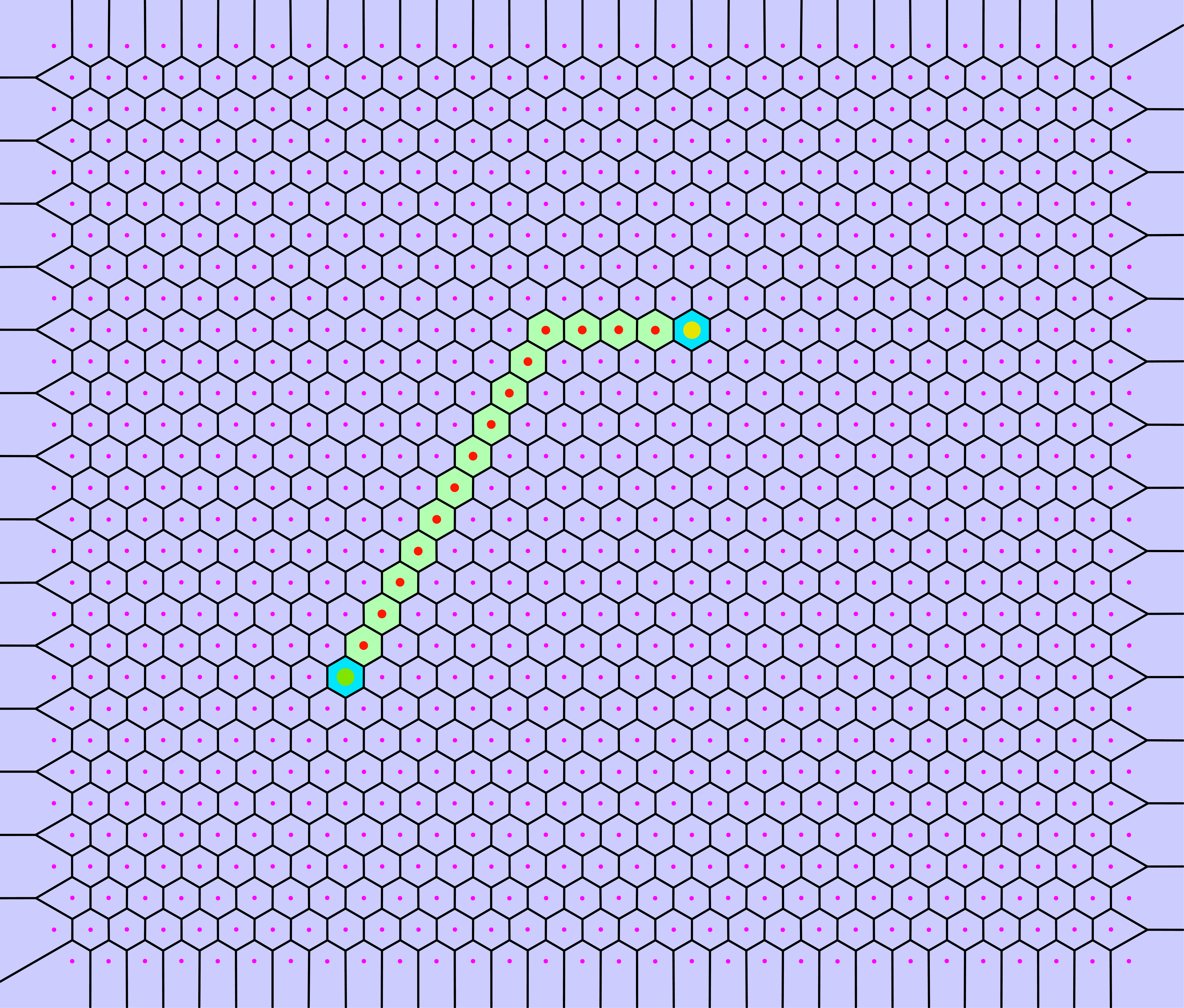}
      \\
      Input & BFS shortest path (14 intermediate points) %
      \\[0.2cm]
      \IncludeGraphics[width=0.45\linewidth]{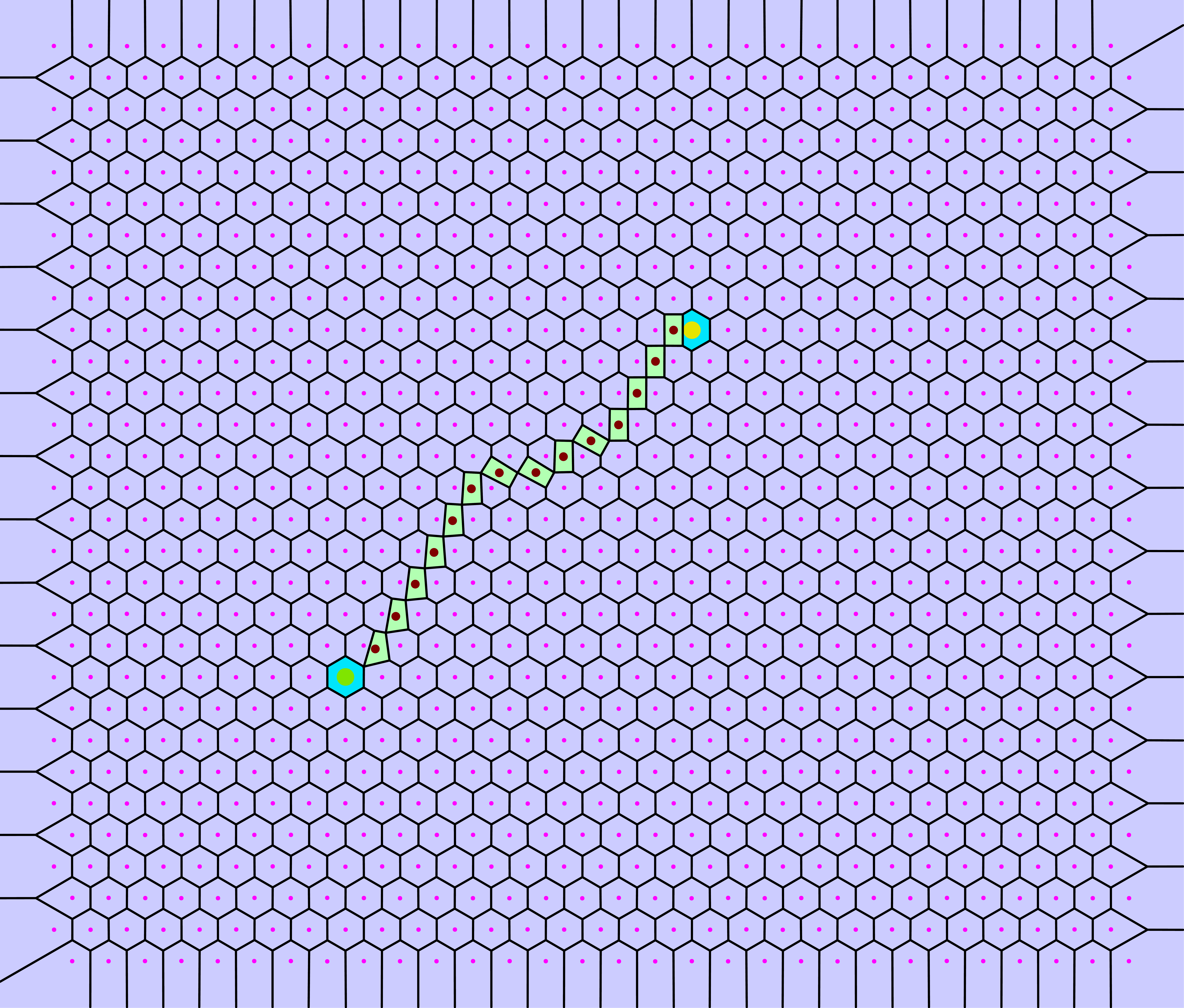}
      &
        \IncludeGraphics[width=0.45\linewidth]{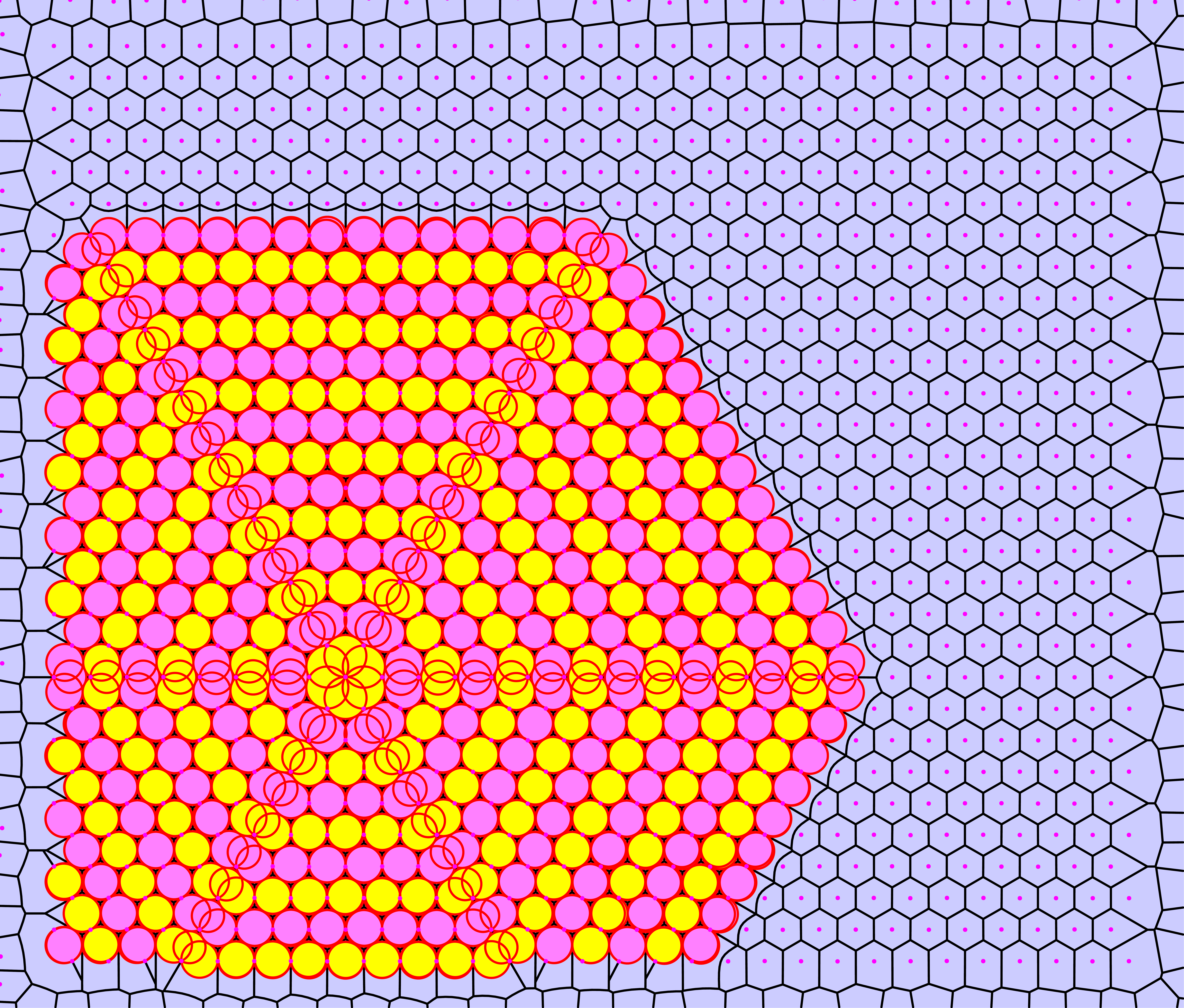}
      \\%
      Algorithm shortest path (14 intermediate points) & Sites inserted during  execution.
    \end{tabular}
    \caption{A point set made out of a hexagonal grid, and the
       algorithm execution on it.}
    \figlab{hex:exec}
\end{figure}

\begin{figure}
    \centerline{%
\begin{tabular}{|l|r|r|r|}
\hline
\hline
Input name & \# points & BFS path & Algorithm path len \\
\hline
   \hline
    hex\_010 & 100 & 5 & 5 \\
   \hline
    hex\_030 & 900 & 18 & 18 \\
   \hline
    hex\_060 & 3,600 & 56 & 54 \\
   \hline
    hex\_090 & 8,100 & 63 & 61 \\
   \hline
    hex\_120 & 14,400 & 77 & 75 \\
   \hline
   \hline
    i\_d\_02 & 17,560 & 55 & 38 \\
   \hline
    i\_d\_04 & 8,816 & 39 & 28 \\
   \hline
    i\_d\_08 & 4,435 & 30 & 23 \\
   \hline
    i\_d\_16 & 2,242 & 22 & 15 \\
   \hline
    i\_d\_32 & 1,123 & 17 & 12 \\
   \hline
    i\_d\_64 & 559 & 11 & 9 \\
   \hline
   \hline
    rand\_00100 & 100 & 5 & 4 \\
   \hline
    rand\_00200 & 200 & 8 & 6 \\
   \hline
    rand\_00400 & 400 & 12 & 9 \\
   \hline
    rand\_01000 & 1,000 & 17 & 14 \\
   \hline
    rand\_02000 & 2,000 & 28 & 21 \\
   \hline
    rand\_04000 & 4,000 & 37 & 30 \\
   \hline
    rand\_08000 & 8,000 & 47 & 38 \\
   \hline
    rand\_16000 & 16,000 & 76 & 61 \\
   \hline
    rand\_32000 & 32,000 & 95 & 74 \\
   \hline
    rand\_64000 & 64,000 & 137 & 110 \\
   \hline
    rand\_128000 & 128,000 & 206 & 164 \\
   \hline
\end{tabular}

    }
    \caption{Input used, and the algorithm performance on these
       inputs.}
    \figlab{results}
\end{figure}

\end{document}